\providecommand{\keywords}[1]
{
  \small	
  \textbf{\textit{Keywords---}} #1
}
\newcommand{\blue}{\color{blue}}
\newtheorem{proposition}{Proposition}
\newtheorem{definition}{Definition}
\newtheorem{corollary}{Corollary}
\newtheorem{remark}{Remark}
\title{\bf
Data-driven Analysis of T-Product-based Dynamical Systems}
\author{Xin Mao\thanks{Xin Mao is with the School of Data Science and Society, University of North Carolina at Chapel Hill, Chapel Hill, NC 27599, USA. {\tt\small xinm@unc.edu}} 
\and Anqi Dong\thanks{Anqi Dong is with the Division of Decision and Control Systems and Department of Mathematics, KTH Royal Institute of Technology, SE-100 44 Stockholm, Sweden. {\tt\small anqid@kth.se}} 
\and Ziqin He\footnotemark[3] \and Yidan Mei \thanks{Ziqin He and Yidan Mei are with the Department of Mathematics, University of North Carolina at Chapel Hill, Chapel Hill, NC 27599, USA. {\tt\small zhe21@unc.edu, ymei@unc.edu}} \and Can Chen\thanks{Can Chen is with the School of Data Science and Society and the Department of Mathematics, University of North Carolina at Chapel Hill, Chapel Hill, NC 27599, USA.{\tt\small canc@unc.edu}
}}
\begin{document}

\maketitle
\thispagestyle{empty}
\pagestyle{empty}

\begin{abstract}
A wide variety of data can be represented using third-order tensors, spanning applications in chemometrics, psychometrics, and image processing. However, traditional data-driven frameworks are not naturally equipped to process tensors without first unfolding or flattening the data, which can result in a loss of crucial higher-order structural information. In this article, we introduce a novel framework for the data-driven analysis of T-product-based dynamical systems (TPDSs), where the system evolution is governed by the T-product between a third-order dynamic tensor and a third-order state tensor. In particular, we examine the data informativity of TPDSs concerning system identification, stability, controllability, and stabilizability and illustrate significant computational improvements over traditional approaches by leveraging the unique properties of the T-product. The effectiveness of our framework is demonstrated through numerical examples.
\end{abstract}

\keywords{Tensors, data-driven analysis, data informativity, system identification, stability, controllability, stabilizability}

\section{Introduction}
Numerous real-world systems, including those found in image processing, biological systems, and social sciences, exhibit complex, multi-dimensional relationships, where the states are often represented as third-order or higher-order tensors \cite{ding2018tensor,chen2015functional,williams2018unsupervised,khoromskij2018tensor}. Multilinear dynamical systems, newly proposed in recent years, extend classical linear systems theory that offer a powerful framework for modeling these tensor-based systems that cannot be adequately captured by traditional methods \cite{rogers2013multilinear,chen2021multilinear}. One of the most effective tools for working with multilinear dynamical systems is the T-product, a mathematical operation that extends matrix multiplications to third-order tensors in a manner analogous to matrix operations such as eigenvalue decomposition and singular value decomposition \cite{braman2010third,kilmer2013third}. 

The T-product framework provides a versatile approach to analyzing and controlling multilinear dynamical systems in a variety of fields, including physics \cite{chang2022t,lund2020tensor}, engineering \cite{tarzanagh2018fast}, and
biology \cite{yang2020sparse,kilmer2013third}. Specifically, the T-product empowers researchers to perform sophisticated operations on multidimensional data, e.g., images, which are often represented as third-order tensors. Traditional matrix-based methods struggle to capture the complex relationships between different dimensions of images including height, width, and color channels. In contrast,  T-product-based methods preserve the inherent multidimensional structure and have potential applications in tasks such as image denoising, image compression, image segmentation, and feature extraction, as demonstrated by recent research \cite{kilmer2011factorization, zhang2016exact, yang2020sparse, ahmadi2023fast, gilman2022grassmannian}.  

T-product-based dynamical systems (TPDSs) are systems whose evolution is governed by the T-product between a third-order dynamic tensor and a third-order state tensor. The concept was first proposed by Hoover et al. \cite{hoover2021new} as a generalization of linear time-invariant (LTI) systems. TPDSs offer a powerful framework for capturing complex interactions in three-dimensional data. The development of tensor decomposition techniques and circulant algebra has enabled a seamless extension of linear systems theory to TPDSs, covering fundamental concepts like explicit solutions, stability, controllability, and observability. Nevertheless, the lack of computational tools for their data-driven analysis has limited their use in practical applications. This gap is particularly evident in areas requiring stability, controllability, and system identification from observational data. 

Data-driven analysis and control have garnered significant attention in recent years \cite{van2020data,markovsky2021behavioral,berberich2020trajectory}. The origins of this field can be traced back to the early 1980s, with the pioneering work on the fundamental lemma (also known as persistency of excitation)  \cite{willems2005note}, which laid the theoretical foundation for using input-output data to infer system properties. Recently, Van Waarde et al. \cite{van2020data} presented a novel data-driven analysis and control framework to investigate the data informativity of linear time-invariant
systems, where data are not informative enough to uniquely identify the system (i.e., the fundamental lemma fails).
Although data-driven research has attracted considerable interest, the exploration of data-driven approaches specifically for tensor-based systems remains relatively underdeveloped. 

We are dedicated to developing a data-driven approach for TPDSs. Extending traditional system identification, stability, and controllability analyses to third-order tensor-based systems is challenging due to the curse of dimensionality. We use the properties of the T-product to address this challenge. Our work provides effective and efficient conditions of data informativity for system identification, stability, controllability, and stabilizability of TPDSs. Additionally, we show how T-product-based conditions offer significant computational advantages over existing unfolding-based approaches, demonstrating their applicability through numerical experiments. The rest of the article is structured as follows. In Section \ref{sec:prelim}, we begin with an overview of T-product operations. In Section \ref{sec:data-driven}, we introduce  TPDSs and examine the data informativity of system identification, stability, controllability, and stabilizability. In Section \ref{sec:num}, we provide numerical examples for the proposed methods. Finally, we conclude our work with a discussion of future directions in Section \ref{sec:conclusion}.

\section{Preliminaries}\label{sec:prelim}
Tensors can be considered as multidimensional arrays, which extend the concepts of vectors and matrices to higher-dimensional settings \cite{chen2024tensor, kolda2009tensor, ragnarsson2012block}. The order of a tensor is defined as the number of dimensions. Of particular interest in this article are third-order tensors, which we denote by $\mathscr{T}\in\mathbb{R}^{n\times m\times s}$. We first introduce the notion of the T-product, an effective operation for manipulating third-order tensors, that enables their multiplication through the notion of circular convolution \cite{kilmer2013third, kilmer2011factorization, zhang2016exact}.

\begin{definition}[\bf \textit{T-product}]
The T-product between two third-order tensor $\mathscr{T}\in\mathbb{R}^{n\times m\times s}$ and $\mathscr{S}\in\mathbb{R}^{m\times r\times s}$, denoted by $\mathscr{T}\star\mathscr{S}$, is defined as 
\begin{equation}
\mathscr{T}\star\mathscr{S} = \texttt{fold}\Big(\texttt{bcirc}(\mathscr{T})\texttt{unfold}(\mathscr{S})\Big) \in\mathbb{R}^{n\times r\times s},
\end{equation}
where  $\texttt{bcirc}(\cdot)$ and $\texttt{unfold}(\cdot)$ are defined as
\begin{align*}\label{eq:bcirc}
\texttt{bcirc}(\mathscr{T})&= \begin{bmatrix}
        \mathscr{T}_{::1} & \mathscr{T}_{::s} & \cdots & \mathscr{T}_{::2}\\
        \mathscr{T}_{::2} & \mathscr{T}_{::1} & \cdots & \mathscr{T}_{::3}\\
        \vdots & \vdots & \ddots & \vdots\\
        \mathscr{T}_{::s} & \mathscr{T}_{::(s-1)} & \cdots & \mathscr{T}_{::1}
\end{bmatrix}\in\mathbb{R}^{ns\times ms},\\
\texttt{unfold}(\mathscr{S}) &= \begin{bmatrix}
        \mathscr{S}_{::1} &
        \mathscr{S}_{::2}&
        \cdots&
        \mathscr{S}_{::s}
\end{bmatrix}^\top\in\mathbb{R}^{ms\times r},
\end{align*}
and  \texttt{fold} is the  reverse operation of \texttt{unfold}.
\end{definition}

It is noteworthy that many fundamental matrix operations, including identity, diagonal, transpose, inverse, and orthogonality can be also generalized to third-order tensors using the T-product:
\begin{enumerate}
    \item {\bf T-identity:} T-identity tensor $\mathscr{I}$ is defined as having the first frontal slice (i.e., $\mathscr{I}_{::1}$) as the identity matrix with all other frontal slices consisting of zeros.
    \item {\bf T-diagonal:} T-diagonal tensor is defined such that each frontal slice is a diagonal matrix.
    \item {\bf T-transpose:} T-transpose of $\mathscr{T}\in\mathbb{R}^{n\times m\times s}$ is obtained by transposing each of the frontal slices and then reversing the order of the transposed frontal slices from 2 to $s$.
    \item {\bf T-inverse:} T-inverse of $\mathscr{T}\in\mathbb{R}^{n\times n\times s}$, denoted by $\mathscr{T}^{-1}$, is defined as $\mathscr{T}\star\mathscr{T}^{-1}=\mathscr{T}^{-1}\star\mathscr{T}=\mathscr{I}$ (similarly for left and right T-inverse).
    \item {\bf T-orthogonal:} A third-order tensor $\mathscr{T}\in\mathbb{R}^{n\times n\times s}$ is called T-orthogonal is $\mathscr{T}\star\mathscr{T}^\top = \mathscr{T}^\top\star\mathscr{T}=\mathscr{I}$.
\end{enumerate}

\noindent As a matter of fact, the operations above can be computed through the circulant operation \texttt{bcirc}. For example, the T-inverse can be attained as $$
\mathscr{T}^{-1}:=\texttt{un-bcirc}(\texttt{bcirc}(\mathscr{T})^{-1}),
$$ where \texttt{un-bcirc} denotes the reverse operation of \texttt{bcirc}. With a slight abuse of notation, we use the same superscript for both matrix and T-product-based operations, e.g., matrix transpose and T-transpose are denoted by $(\cdot)^{\top}$.

Notably, eigenvalue decomposition and singular value decomposition can be defined for third-order tensors in a similar manner as matrices through the T-product \cite{kilmer2013third,braman2010third}.

\begin{definition}[\bf \textit{T-eigenvalue decomposition}] The
T-eigen-value decomposition of a third-order tensor $\mathscr{T}\in\mathbb{R}^{n\times n\times s}$ is defined as
\begin{equation}
    \mathscr{T}=\mathscr{U}\star\mathscr{D}\star\mathscr{U}^{-1},
\end{equation}
where $\mathscr{U}\in\mathbb{R}^{n\times n\times s}$ and $\mathscr{D}\in\mathbb{R}^{n\times n\times s}$ is T-diagonal such that $\mathscr{D}_{jj:}\in\mathbb{R}^{s}$ are referred to the eigentuples of $\mathscr{T}$.
\end{definition}

The T-eigenvalue decomposition can be computed using circulant operation \(\texttt{bcirc}\) and matrix eigenvalue decomposition. However, employing the discrete Fourier transform can significantly expedite the process. In particular, a circulant matrix can be block diagonalized via left and right multiplication by a block diagonal discrete Fourier transform matrix. The Fourier transform $\mathcal{F}\{\texttt{bcirc}(\mathscr{T})\}$  is defined as
\begin{align*}
\mathcal{F}\{\texttt{bcirc}(\mathscr{T})\} =& (\textbf{F}_n\otimes \textbf{I}) \texttt{bcirc}(\mathscr{T}) (\textbf{F}_n\otimes \textbf{I})^\top \\
=& \begin{bmatrix}
       \textbf{T}_1 & \textbf{0} &\cdots & \textbf{0}\\
        \textbf{0} & \textbf{T}_2  &\cdots & \textbf{0}\\
        \vdots & \vdots & \ddots & \vdots\\
        \textbf{0} & \textbf{0} & \cdots & \textbf{T}_s
   \end{bmatrix},
\end{align*}
where $\textbf{F}_n\in\mathbb{R}^{n\times n}$ is the discrete Fourier transform matrix defined as
\begin{equation*}
    \textbf{F}_n= \frac{1}{\sqrt{n}}\begin{bmatrix}
        1 & 1 & 1 &\cdots & 1\\
        1 & \omega & \omega^2 & \cdots & \omega^{n-1}\\
        \vdots & \vdots & \vdots & \ddots & \vdots\\
        1 & \omega^{n-1} & \omega^{2(n-1)} & \cdots & \omega^{(n-1)^2}
    \end{bmatrix},
\end{equation*}
with $\omega = \exp{\{\frac{-2\pi i}{n}\}}$ (note that $i$ denotes the imaginary number here), and $\otimes$ denotes the Kronecker product. Hence, the T-eigenvalue decomposition of $\mathscr{T}$ can be constructed through the eigenvalue decompositions of $\textbf{T}_j$.  Given that $\textbf{T}_j = \textbf{U}_j\textbf{D}_j\textbf{U}^{-1}_j$, $\mathscr{U}$ can be recovered by
\begin{equation*}
    \texttt{bcric}(\mathscr{U}) = (\textbf{F}_n\otimes \textbf{I})^{*}\texttt{blkdaig}(\textbf{U}_1,\dots,\textbf{U}_s) (\textbf{F}_n\otimes \textbf{I}),
\end{equation*}
where operation $\texttt{blkdiag}$ denotes the MATLAB block diagonal function, and the superscript $[\cdot]^\star$ denotes the conjugate transpose. The T-diagonal tensor $\mathscr{D}$ can be obtained in a similar manner. 

 
\begin{definition}[\bf \textit{T-singular value decomposition}]
T-singul-ar value decomposition (T-SVD) of  a third-order tensor $\mathscr{T}\in\mathbb{R}^{n\times m\times s}$ is defined as
    \begin{equation}
        \mathscr{T}=\mathscr{U}\star\mathscr{S}\star\mathscr{V}^{\top},
    \end{equation}
where $\mathscr{U}\in\mathbb{R}^{n\times n\times s}$ and $\mathscr{V}\in\mathbb{R}^{m\times m\times s}$ are T-orthogonal, and $\mathscr{S}\in\mathbb{R}^{n\times m\times s}$ is a T-(rectangle) diagonal tensor such that 
$\mathscr{S}_{jj:}\in\mathbb{R}^{s}$ are referred to the singular tuples of $\mathscr{T}$.
\end{definition}

The T-singular value decomposition can be computed analogously by applying the Fourier transform \(\mathcal{F}\{\texttt{bcirc}(\mathscr{T})\}\) and then performing singular value decomposition on the block diagonal matrices \(\textbf{T}_j\). We will show that both T-eigenvalue decomposition and T-singular value decomposition play critical roles in the data-driven analysis of TPDSs.

\section{Data-driven Analysis of TPDSs}\label{sec:data-driven}
We are now positioned to conduct data-driven analysis of TPDSs, which are generally expressed in the form of
\begin{equation}\label{eq:tpds}
\mathscr{X}(t+1) = \mathscr{A}\star\mathscr{X}(t),
\end{equation}
with $\mathscr{A}\in\mathbb{R}^{n\times n\times r}$ represents the state transition tensor, and $\mathscr{X}(t)\in\mathbb{R}^{n\times h\times r}$ denotes the state. Significantly, the TPDS  \eqref{eq:tpds} can be transformed into LTI systems using  \texttt{bcirc} and \texttt{unfold}, resulting in two linear representations, i.e., 
\begin{align*}
    \texttt{unfold}\left(\mathscr{X}(t+1)\right) &= \texttt{bcirc}(\mathscr{A})\texttt{unfold}(\mathscr{X}(t)),\\
    \texttt{bcirc}(\mathscr{X}(t+1)) &= \texttt{bcirc}(\mathscr{A})\texttt{bcirc}(\mathscr{X}(t)).
\end{align*}
As a result, data-driven analysis techniques from LTI systems can be effectively extended to TPDSs.

Assume the state  data tensors by concatenating are collected at the second mode, i.e., 
\begin{align*}
    \mathscr{X}_0 &= 
    \begin{bmatrix}
        \mathscr{X}(0) & \mathscr{X}(1) & \cdots & \mathscr{X}(l-1)
    \end{bmatrix}\in\mathbb{R}^{n\times lh\times r},\\
    \mathscr{X}_1 &= 
    \begin{bmatrix}
        \mathscr{X}(1) & \mathscr{X}(2) & \cdots & \mathscr{X}(l)
    \end{bmatrix}\in\mathbb{R}^{n\times lh\times r}.
\end{align*}
While the unfolded linear representations are useful for studying the data-driven analysis of TPDSs, the full computational benefits are achieved by leveraging the properties of the T-product. In the following, we first examine the data informativity of TPDSs with respect to system identification, stability, controllability, and stabilizability. The efficiency of T-product-based computations in expressing these conditions is also illustrated, along with numerical examples.

\subsection{System Identification}\label{subsec:info}
The data informativity for system identification of TPDSs involves determining the conditions under which the state transition tensor $\mathscr{A}$ can be uniquely identified.

\begin{definition}[\bf \textit{System identification}]
    We say the data $(\mathscr{X}_0,\mathscr{X}_1)$ are informative for system identification if the state transition tensor $\mathscr{A}$ can be uniquely identified. 
\end{definition}


\begin{proposition}\label{prop:si}
    The data $(\mathscr{X}_0,\mathscr{X}_1)$ are informative for system identification if and only if the rank of  $\texttt{bcirc}(\mathscr{X}_0)$ is equal to $nr$.
\end{proposition}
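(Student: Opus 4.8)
The plan is to reduce identifiability to the stated rank condition by working in the block-circulant representation of \eqref{eq:tpds}, and to establish the converse by passing to the Fourier block-diagonalization introduced above.

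First I would set up the data equation. Since the T-product distributes over concatenation along the second mode, the collected snapshots satisfy $\mathscr{X}_1 = \mathscr{A}\star\mathscr{X}_0$, and applying $\texttt{bcirc}(\cdot)$ together with the homomorphism identity $\texttt{bcirc}(\mathscr{A}\star\mathscr{X}_0)=\texttt{bcirc}(\mathscr{A})\,\texttt{bcirc}(\mathscr{X}_0)$ (used already in the linear representations of \eqref{eq:tpds}) gives $\texttt{bcirc}(\mathscr{A})\,\texttt{bcirc}(\mathscr{X}_0) = \texttt{bcirc}(\mathscr{X}_1)$. Because $\mathscr{B}\mapsto\texttt{bcirc}(\mathscr{B})$ is a linear bijection from $\mathbb{R}^{n\times n\times r}$ onto the space of $r\times r$ block-circulant matrices with $n\times n$ blocks, the data are informative for system identification if and only if $\texttt{bcirc}(\mathscr{A})$ is the only block-circulant matrix $M$ solving $M\,\texttt{bcirc}(\mathscr{X}_0) = \texttt{bcirc}(\mathscr{X}_1)$; equivalently, if and only if $\mathscr{B}\star\mathscr{X}_0 = 0$ implies $\mathscr{B}=0$. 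The sufficiency direction is then immediate: if $\texttt{bcirc}(\mathscr{X}_0)\in\mathbb{R}^{nr\times lhr}$ has full row rank $nr$, the equation $M\,\texttt{bcirc}(\mathscr{X}_0) = \texttt{bcirc}(\mathscr{X}_1)$ has at most one solution over all matrices, hence exactly one, so $\mathscr{A}$ is uniquely identified.

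For necessity I would use the Fourier domain. Conjugating by the unitary $\textbf{F}_r\otimes\textbf{I}$ block-diagonalizes all three matrices at once, $\mathcal{F}\{\texttt{bcirc}(\mathscr{X}_0)\}=\texttt{blkdiag}(\textbf{X}_1,\dots,\textbf{X}_r)$ with $\textbf{X}_j\in\mathbb{C}^{n\times lh}$, and similarly for $\mathscr{A}$ and $\mathscr{X}_1$, so the data equation decouples into $\textbf{A}_j\textbf{X}_j=\textbf{Y}_j$ for $j=1,\dots,r$. Unitary conjugation preserves rank and block-diagonal structure, so $\texttt{rank}\,\texttt{bcirc}(\mathscr{X}_0)=\sum_{j=1}^{r}\texttt{rank}\,\textbf{X}_j$ with each summand at most $n$; hence $\texttt{rank}\,\texttt{bcirc}(\mathscr{X}_0)=nr$ is equivalent to every $\textbf{X}_j$ having full row rank $n$. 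If instead some $\textbf{X}_{j_0}$ is row-rank deficient, pick $0\neq w\in\mathbb{C}^{n}$ with $w^{\ast}\textbf{X}_{j_0}=0$; then $\textbf{A}_{j_0}$ and $\textbf{A}_{j_0}+uw^{\ast}$ (any $u\neq 0$) both solve the $j_0$-th equation, producing two distinct block-diagonal matrices consistent with the data, i.e.\ non-identifiability.

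The step I expect to be the main obstacle is verifying that the perturbed solution still corresponds to a \emph{real} dynamic tensor. I would dispatch this by the usual conjugate-symmetry bookkeeping: since $\mathscr{X}_0$ is real, $\textbf{X}_j$ and $\textbf{X}_{r+2-j}$ (indices taken mod $r$) are complex conjugates, so a rank deficiency at $j_0$ is mirrored at its conjugate index; perturbing $\textbf{A}_{j_0}$ and $\textbf{A}_{r+2-j_0}$ by conjugate rank-one terms — or by a single real term when $j_0$ is self-conjugate — keeps the block-diagonal matrix conjugate-symmetric and therefore equal to $\mathcal{F}\{\texttt{bcirc}(\mathscr{A}')\}$ for a real $\mathscr{A}'\neq\mathscr{A}$ reproducing the data. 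An alternative that avoids complex arithmetic is to argue over $\mathbb{R}$ directly: show the left null space $\mathcal{N}\subseteq\mathbb{R}^{nr}$ of $\texttt{bcirc}(\mathscr{X}_0)$ is invariant under the block cyclic shift $\textbf{S}_r\otimes\textbf{I}_n$, with $\textbf{S}_r$ the $r\times r$ cyclic shift permutation (immediate from $(\textbf{S}_r\otimes\textbf{I}_n)\,\texttt{bcirc}(\mathscr{X}_0)=\texttt{bcirc}(\mathscr{X}_0)(\textbf{S}_r\otimes\textbf{I}_{lh})$, the defining intertwining property of block-circulant matrices), and then, from any $0\neq v\in\mathcal{N}$, build a nonzero real $\mathscr{B}$ by placing $v^{\top}$ as one row of the first block row of $\texttt{bcirc}(\mathscr{B})$ and zeros in the remaining rows of that block row: shift-invariance of $\mathcal{N}$ forces all remaining rows of $\texttt{bcirc}(\mathscr{B})$ into $\mathcal{N}$ too, whence $\mathscr{B}\star\mathscr{X}_0=0$ with $\mathscr{B}\neq 0$.
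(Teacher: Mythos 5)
Your proposal is correct, and its core reduction is the same as the paper's: pass to the block-circulant representation, observe that the data equation becomes $\texttt{bcirc}(\mathscr{A})\,\texttt{bcirc}(\mathscr{X}_0)=\texttt{bcirc}(\mathscr{X}_1)$, and tie identifiability to full row rank $nr$ of $\texttt{bcirc}(\mathscr{X}_0)$. Where you genuinely add something is the necessity direction. The paper's proof disposes of both directions with a one-line appeal to ``linear matrix theory,'' which settles uniqueness of a solution $M$ of $M\,\texttt{bcirc}(\mathscr{X}_0)=\texttt{bcirc}(\mathscr{X}_1)$ over \emph{all} matrices $M$; but the set of systems consistent with the data corresponds only to \emph{block-circulant} solutions (equivalently, real tensors $\mathscr{A}$), so rank deficiency of $\texttt{bcirc}(\mathscr{X}_0)$ does not automatically yield a second admissible $\mathscr{A}$. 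You identify exactly this gap and close it two ways: (i) in the Fourier domain, by perturbing a rank-deficient block with a rank-one left-null term and restoring conjugate symmetry at the mirrored index so the perturbation descends to a real tensor; and (ii) entirely over $\mathbb{R}$, by showing the left null space of $\texttt{bcirc}(\mathscr{X}_0)$ is invariant under the block cyclic shift (the intertwining property of block-circulant matrices) and assembling a nonzero block-circulant annihilator from any nonzero left-null vector. Either argument is sound, and the second is a clean way to avoid complex arithmetic altogether. The only caveat is that your extra machinery is needed precisely because the paper's stated justification is incomplete as written, not because the statement fails; your sufficiency direction coincides with the paper's, and your Fourier decomposition is the same device the paper defers to Corollaries \ref{coro:sisvd} and \ref{coro:sirk}.
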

 
\begin{proof}
    Substituting data $(\mathscr{X}_0,\mathscr{X}_1)$ into the TPDS \eqref{eq:tpds} gives $\mathscr{X}_1=\mathscr{A}\star\mathscr{X}_0$. Due to the properties of block circulant matrix operation, it follows that 
    \begin{align*}
    \texttt{bcirc}(\mathscr{A})&=\texttt{bcirc}(\mathscr{X}_1)\texttt{bcirc}(\mathscr{X}_0)^\dagger.
    \end{align*}
 According to linear matrix theory, $\texttt{bcirc}(\mathscr{A})$ can be uniquely determined if and only if  $\texttt{bcirc}(\mathscr{X}_0)$ has full row rank, i.e., $nr$. The result thus follows immediately. 
\end{proof}

We choose to use the second linear representation because the rank of $\texttt{bcirc}(\mathscr{X}_0)$ can be determined through T-product-based computations. Specifically, we can leverage the T-SVD of $\mathscr{X}_0$ and the associated block diagonal matrices in the Fourier domain.

\begin{corollary}\label{coro:sisvd}
    The data $(\mathscr{X}_0,\mathscr{X}_1)$ are informative for system identification if and only if the singular tuples of $\mathscr{X}_0$ in the Fourier domain contain non-zero entries. 
\end{corollary}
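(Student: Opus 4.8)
The plan is to chain Proposition~\ref{prop:si} together with the block-diagonalization of $\texttt{bcirc}(\mathscr{X}_0)$ in the Fourier domain. By Proposition~\ref{prop:si}, informativity for system identification is equivalent to $\texttt{bcirc}(\mathscr{X}_0)$ having full row rank $nr$, so it suffices to translate this rank condition into a statement about the singular tuples of $\mathscr{X}_0$. First I would recall that, exactly as in the computation of the T-eigenvalue decomposition and T-SVD described in Section~\ref{sec:prelim}, the unitary change of basis $(\textbf{F}_r\otimes\textbf{I})$ block-diagonalizes the block circulant matrix: $(\textbf{F}_r\otimes\textbf{I})\,\texttt{bcirc}(\mathscr{X}_0)\,(\textbf{F}_r\otimes\textbf{I})^{*}=\texttt{blkdiag}(\textbf{X}_1,\dots,\textbf{X}_r)$, where $\textbf{X}_j\in\mathbb{R}^{n\times lh}$ are the frontal slices of $\mathcal{F}\{\texttt{bcirc}(\mathscr{X}_0)\}$.

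Next I would use the fact that multiplication by a unitary (indeed, invertible) matrix on either side preserves rank, so $\texttt{rank}\big(\texttt{bcirc}(\mathscr{X}_0)\big)=\sum_{j=1}^{r}\texttt{rank}(\textbf{X}_j)$. Since each $\textbf{X}_j$ has $n$ rows, the total rank equals $nr$ if and only if every $\textbf{X}_j$ has full row rank $n$, i.e., each block $\textbf{X}_j$ has $n$ nonzero singular values. Finally, I would invoke the definition of the T-SVD: the singular tuples $\mathscr{S}_{jj:}\in\mathbb{R}^{r}$ of $\mathscr{X}_0$ are, up to the Fourier transform along the third mode, precisely the collections of the $j$-th singular values of the blocks $\textbf{X}_1,\dots,\textbf{X}_r$. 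Hence "the singular tuples of $\mathscr{X}_0$ in the Fourier domain contain non-zero entries" is exactly the statement that none of the $n$ singular values of any block $\textbf{X}_j$ vanishes, which matches the condition derived above, and the corollary follows.

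The main obstacle, such as it is, is bookkeeping about indexing rather than anything deep: one must be careful about which index of the singular tuples ranges over the blocks $\textbf{X}_j$ and which ranges over the $n$ singular values within a block, and about the (harmless) caveat that for full row rank we only need the $\min(n,lh)=n$ relevant singular values to be nonzero — implicitly assuming $lh\ge n$, which is necessary for informativity anyway. I would also note in passing that "in the Fourier domain" is what lets us read the condition off the $\textbf{X}_j$ directly; stated on $\mathscr{X}_0$ itself the singular tuples are circular convolutions and the non-vanishing condition is less transparent. Beyond that, the argument is a direct consequence of Proposition~\ref{prop:si} and the rank-additivity of block-diagonalization, so I would keep the write-up short.
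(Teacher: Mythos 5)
Your argument is correct and follows essentially the same route as the paper: reduce to Proposition~\ref{prop:si}, block-diagonalize $\texttt{bcirc}(\mathscr{X}_0)$ in the Fourier domain, and identify the nonzero singular values of the blocks with the entries of the singular tuples (the paper simply cites this last fact from the literature rather than re-deriving it). Your extra care about indexing and the $lh\ge n$ caveat is fine but not a substantive difference.
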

 
\begin{proof}
    Based on the finding in \cite{lund2020tensor}, the singular values of $\texttt{bcirc}(\mathscr{X}_0)$ are the union of elements from the singular tuples of $\mathscr{X}_0$ in the Fourier domain. According to linear matrix theory, the rank of $\texttt{bcirc}(\mathscr{X}_0)$ is determined by the number of its non-zero singular values. Therefore, the result follows from Proposition \ref{prop:si}.
\end{proof}

\begin{corollary}\label{coro:sirk}
    The data $(\mathscr{X}_0,\mathscr{X}_1)$ are informative for system identification if and only if
    the sum of the ranks of the block diagonal matrices of $\mathcal{F}
    \{\texttt{bcirc}(\mathscr{X}_0)\}$ is equal to $nr$.
\end{corollary}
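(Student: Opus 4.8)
The plan is to reduce the statement to Proposition \ref{prop:si} by exploiting the rank-invariance of the block diagonalization in the Fourier domain. First I would recall that, exactly as in the Preliminaries, the Fourier transform acts by a similarity transformation: $\mathcal{F}\{\texttt{bcirc}(\mathscr{X}_0)\} = (\textbf{F}\otimes\textbf{I})\,\texttt{bcirc}(\mathscr{X}_0)\,(\textbf{F}\otimes\textbf{I})^{*}$, where $\textbf{F}$ is the $r\times r$ discrete Fourier transform matrix associated with the third mode of $\mathscr{X}_0\in\mathbb{R}^{n\times lh\times r}$. Since $\textbf{F}$ is unitary, so is $\textbf{F}\otimes\textbf{I}$, and left/right multiplication by invertible matrices preserves rank; hence $\mathrm{rank}\big(\texttt{bcirc}(\mathscr{X}_0)\big) = \mathrm{rank}\big(\mathcal{F}\{\texttt{bcirc}(\mathscr{X}_0)\}\big)$.

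Next I would use that $\mathcal{F}\{\texttt{bcirc}(\mathscr{X}_0)\}$ is block diagonal with blocks $\textbf{X}_1,\dots,\textbf{X}_r \in \mathbb{R}^{n\times lh}$ (the analogues of the $\textbf{T}_j$ for the data tensor $\mathscr{X}_0$). Because the rank of a block diagonal matrix is the sum of the ranks of its diagonal blocks, we obtain $\mathrm{rank}\big(\texttt{bcirc}(\mathscr{X}_0)\big) = \sum_{j=1}^{r}\mathrm{rank}(\textbf{X}_j)$. Combining this with Proposition \ref{prop:si}, the data $(\mathscr{X}_0,\mathscr{X}_1)$ are informative for system identification if and only if $\texttt{bcirc}(\mathscr{X}_0)$ has full row rank $nr$, i.e.\ if and only if $\sum_{j=1}^{r}\mathrm{rank}(\textbf{X}_j) = nr$, which is the claim. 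As a consistency check one may note this also recovers Corollary \ref{coro:sisvd}, since $\mathrm{rank}(\textbf{X}_j)$ equals the number of nonzero singular values of $\textbf{X}_j$, and these are precisely the entries of the singular tuples of $\mathscr{X}_0$ in the Fourier domain.

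There is essentially no substantive obstacle here; the only point requiring a little care is the unitarity and normalization of the Fourier transform matrix, which is what guarantees that passing to $\mathcal{F}\{\texttt{bcirc}(\mathscr{X}_0)\}$ is a genuine rank-preserving operation. Everything else is a direct application of elementary linear algebra together with Proposition \ref{prop:si}.
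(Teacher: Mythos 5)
Your proof is correct and follows essentially the same route as the paper: both reduce to Proposition \ref{prop:si} via the Fourier block-diagonalization of $\texttt{bcirc}(\mathscr{X}_0)$. You simply make explicit the rank-preservation under the unitary transform and the additivity of rank over diagonal blocks, whereas the paper phrases the same mechanism in terms of the singular tuples of the blocks; your version is, if anything, slightly more self-contained.
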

 
\begin{proof}
     The singular tuples of $\mathscr{X}_0$ in the Fourier domain can be computed from the SVDs of the  block diagonal matrices of $\mathcal{F}
    \{\texttt{bcirc}(\mathscr{X}_0)\}$. Therefore,  each block diagonal matrix has full rank if and only if the corresponding singular tuples in the Fourier domain contains non-zero entries. The result then follows from Proposition \ref{prop:si}. 
\end{proof}

\begin{remark}\label{rk:1}
    The time complexity of directly computing the rank of $\texttt{bcirc}(\mathscr{X}_0)$ is about $\mathcal{O}(n^2r^3lh)$ (assuming $n<lh$). Note that the complexity reduces to $\mathcal{O}(n^2r^2lh)$ by using the first linear representation. On the other hand, both Corollaries \ref{coro:sisvd} and \ref{coro:sirk} only require  $\mathcal{O}(n^2rlh)$ operations for determining the data informativity for system identification of TPDSs. Therefore, T-product-based computations offer computational benefits over the unfolding-based approach.
\end{remark}

\subsection{Stability}
The data informativity for the stability of TPDSs involves determining whether any state transition tensor $\mathscr{A}$ identified from the data is stable. Specifically, a state transition tensor $\mathscr{A}$ is considered stable if $\texttt{bcirc}(\mathscr{A})$ is stable, meaning that the eigenvalues of $\texttt{bcirc}(\mathscr{A})$ lie within the unit circle.

\begin{definition}[\bf \textit{Stability}]
    We say the data $(\mathscr{X}_0,\mathscr{X}_1)$ are informative for stability if any state transition tensor $\mathscr{A}$ identified from the data is stable.
\end{definition}


\begin{proposition}\label{prop:stability}
The data $(\mathscr{X}_0,\mathscr{X}_1)$ are informative for stability if the following conditions are satisfied.
\begin{enumerate}
        \item[(i)] the rank of $\texttt{bcirc}(\mathscr{X}_0)$ is equal to $nr$;
        \item[(ii)] $\texttt{bcirc}(\mathscr{X}_1\star \mathscr{X}_0^\dagger)$ is stable (i.e., its eigenvalues are less than or equal to one) for any right T-inverse $\mathscr{X}_0^\dagger$,
\end{enumerate}
\end{proposition}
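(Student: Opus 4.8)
The plan is to reduce the claim to the identification result of Proposition~\ref{prop:si} and then push the spectral hypothesis~(ii) through the block-circulant representation. First I would observe that, by~(i) and Proposition~\ref{prop:si}, the data are informative for system identification, so there is a \emph{unique} state transition tensor $\mathscr{A}$ consistent with $(\mathscr{X}_0,\mathscr{X}_1)$; since $\texttt{bcirc}(\cdot)$ is injective on third-order tensors (its block structure recovers the frontal slices), it then suffices to show that this particular $\mathscr{A}$ is stable, i.e., that $\texttt{bcirc}(\mathscr{A})$ is stable.

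Next I would connect $\mathscr{A}$ to the quantity appearing in~(ii). Concatenating the defining relation $\mathscr{X}(t+1)=\mathscr{A}\star\mathscr{X}(t)$ along the second mode gives $\mathscr{X}_1=\mathscr{A}\star\mathscr{X}_0$, and, exactly as in the proof of Proposition~\ref{prop:si}, $\texttt{bcirc}(\mathscr{X}_1)=\texttt{bcirc}(\mathscr{A})\,\texttt{bcirc}(\mathscr{X}_0)$ because $\texttt{bcirc}$ is multiplicative. Condition~(i) says $\texttt{bcirc}(\mathscr{X}_0)$ has full row rank $nr$; passing to the Fourier domain $\mathcal{F}\{\texttt{bcirc}(\mathscr{X}_0)\}=\texttt{blkdiag}(\textbf{X}_1,\dots,\textbf{X}_r)$, each block $\textbf{X}_j$ has full row rank $n$ and hence admits a right inverse $\textbf{Y}_j$; recombining $\texttt{blkdiag}(\textbf{Y}_1,\dots,\textbf{Y}_r)$ through the inverse Fourier transform yields a block-circulant matrix, i.e., a tensor $\mathscr{X}_0^\dagger$ with $\mathscr{X}_0\star\mathscr{X}_0^\dagger=\mathscr{I}$. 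Right-multiplying $\mathscr{X}_1=\mathscr{A}\star\mathscr{X}_0$ by any such right T-inverse gives $\mathscr{X}_1\star\mathscr{X}_0^\dagger=\mathscr{A}\star\mathscr{X}_0\star\mathscr{X}_0^\dagger=\mathscr{A}$, so $\texttt{bcirc}(\mathscr{A})=\texttt{bcirc}(\mathscr{X}_1\star\mathscr{X}_0^\dagger)$; in particular the right-hand side does not depend on which right T-inverse is chosen.

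Finally I would conclude: hypothesis~(ii) states exactly that $\texttt{bcirc}(\mathscr{X}_1\star\mathscr{X}_0^\dagger)$ is stable, hence $\texttt{bcirc}(\mathscr{A})$ is stable, hence $\mathscr{A}$ is stable; and since $\mathscr{A}$ is the unique tensor compatible with the data, $(\mathscr{X}_0,\mathscr{X}_1)$ are informative for stability.

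The routine ingredients (multiplicativity and injectivity of $\texttt{bcirc}$, and the concatenation identity $\mathscr{X}_1=\mathscr{A}\star\mathscr{X}_0$) are immediate from Section~\ref{sec:prelim}, so the step that genuinely needs care is the existence of a right T-inverse under~(i): one must check both that full row rank of the \emph{block-circulant} matrix forces full row rank of every Fourier-domain block, and, conversely, that block-wise right inverses reassemble into a matrix that is again block circulant, so that it legitimately equals $\texttt{bcirc}(\mathscr{X}_0^\dagger)$ for some tensor $\mathscr{X}_0^\dagger$. This is where the circulant/Fourier structure is essential and is the main obstacle. A minor accompanying point is to record, as above, that $\texttt{bcirc}(\mathscr{X}_1\star\mathscr{X}_0^\dagger)$ is independent of the choice of right T-inverse, which is precisely what makes the phrase ``for any right T-inverse'' in~(ii) well posed.
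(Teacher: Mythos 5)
Your proof is correct for the statement as given, but it takes a genuinely different route from the paper. The paper's proof is a one-line reduction: it invokes the data-informativity-for-stability theorem for LTI systems from \cite{van2020data} applied to the matrix data $(\texttt{bcirc}(\mathscr{X}_0),\texttt{bcirc}(\mathscr{X}_1))$, and then translates the resulting condition back to tensors via the multiplicativity identity $\texttt{bcirc}(\mathscr{X}_1)\texttt{bcirc}(\mathscr{X}_0)^\dagger=\texttt{bcirc}(\mathscr{X}_1\star\mathscr{X}_0^\dagger)$. You instead give a self-contained elementary argument: condition (i) forces unique identification by Proposition~\ref{prop:si}, the unique consistent tensor satisfies $\mathscr{A}=\mathscr{X}_1\star\mathscr{X}_0^\dagger$ for any right T-inverse, and condition (ii) is precisely the stability of that tensor. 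What each buys: your route avoids citing the external LTI result entirely and makes transparent why the expression in (ii) is independent of the choice of right T-inverse; the paper's route, being an application of an if-and-only-if theorem, also delivers the converse direction (which the subsequent Corollaries~\ref{coro:stability1} and \ref{coro:stability2} implicitly rely on, since they are stated as equivalences), whereas your argument establishes only sufficiency --- the necessity part would require showing that a rank-deficient $\texttt{bcirc}(\mathscr{X}_0)$ leaves an unbounded affine family of consistent systems containing unstable members, which is the nontrivial content of the cited theorem. One small simplification to the step you flag as the main obstacle: you do not need to reassemble blockwise right inverses through the inverse Fourier transform (where you would also have to check the conjugate-symmetry of the blocks to get a real block-circulant matrix); since $\texttt{bcirc}$ commutes with transposition, products, and inversion, the Moore--Penrose right inverse $\texttt{bcirc}(\mathscr{X}_0)^\top\bigl(\texttt{bcirc}(\mathscr{X}_0)\texttt{bcirc}(\mathscr{X}_0)^\top\bigr)^{-1}$ is itself block circulant and hence equals $\texttt{bcirc}(\mathscr{X}_0^\dagger)$ for a genuine right T-inverse $\mathscr{X}_0^\dagger$.
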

 
\begin{proof}
Based on the finding of the data informativity for stability of LTI systems \cite{van2020data},  the matrix data $(\texttt{bcirc}(\mathscr{X}_0), \texttt{bcirc}(\mathscr{X}_1))$ are informative for stability if and only if the rank of $\texttt{bcirc}(\mathscr{X}_0)$ is equal to $nr$ and $\texttt{bcirc}(\mathscr{X}_1) \texttt{bcirc}(\mathscr{X}_0)^\dagger$ is stable. Additionally, according to the properties of block circulant matrices, it follows that
\begin{equation*}
    \texttt{bcirc}(\mathscr{X}_1) \texttt{bcirc}(\mathscr{X}_0)^\dagger = \texttt{bcirc}(\mathscr{X}_1\star \mathscr{X}_0^\dagger).
\end{equation*}
Therefore, the result follows immediately. 
\end{proof}

Similar to LTI systems, the data $(\mathscr{X}_0,\mathscr{X}_1)$ are informative for stability only if the system can be uniquely identified (i.e., the data are informative for system identification). In the following, we exploit the T-eigenvalue decomposition/T-SVD and the corresponding block diagonal matrices in the Fourier domain to articulate the aforementioned conditions.

\begin{corollary}\label{coro:stability1}
    The data $(\mathscr{X}_0,\mathscr{X}_1)$ are informative for stability if and only if the following conditions:
\begin{enumerate}
    \item[(i)]  the singular tuples of $\mathscr{X}_0$ in the Fourier domain contain non-zero entries; 
    \item [(ii)] the eigentuples of $\mathscr{X}_1\star \mathscr{X}_0^\dagger$ in the Fourier domain contain entries that are less than or equal to one for any right T-inverse $\mathscr{X}_0^\dagger$,
\end{enumerate}
 are satisfied. 
\end{corollary}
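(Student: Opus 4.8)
The plan is to translate the two conditions of Proposition~\ref{prop:stability} from the block-circulant (matrix) domain into the Fourier domain, using exactly the same circulant-to-block-diagonal correspondence that was invoked in Corollaries~\ref{coro:sisvd} and~\ref{coro:sirk}. Recall that $\mathcal{F}\{\texttt{bcirc}(\mathscr{T})\} = (\mathbf{F}_n\otimes\mathbf{I})\,\texttt{bcirc}(\mathscr{T})\,(\mathbf{F}_n\otimes\mathbf{I})^\ast$ is block diagonal with blocks $\mathbf{T}_1,\dots,\mathbf{T}_r$, and that this is a unitary similarity transformation. Since both the spectrum and the singular values of a matrix are invariant under unitary similarity, the eigenvalues (resp. singular values) of $\texttt{bcirc}(\mathscr{T})$ are precisely the union of the eigenvalues (resp. singular values) of the blocks $\mathbf{T}_j$, which are exactly the entries of the eigentuples (resp. singular tuples) of $\mathscr{T}$ in the Fourier domain. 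This is the single algebraic fact that drives the whole argument.

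For condition (i), I would simply re-invoke Corollary~\ref{coro:sisvd}: the rank of $\texttt{bcirc}(\mathscr{X}_0)$ equals $nr$ if and only if all $nr$ of its singular values are nonzero, which happens if and only if the singular tuples of $\mathscr{X}_0$ in the Fourier domain have no zero entries. For condition (ii), the key observation is that $\texttt{bcirc}(\mathscr{X}_1\star\mathscr{X}_0^\dagger)$ is stable — i.e. all its eigenvalues lie in (or on) the unit circle — if and only if, for every right T-inverse $\mathscr{X}_0^\dagger$, every block $\mathbf{M}_j$ of $\mathcal{F}\{\texttt{bcirc}(\mathscr{X}_1\star\mathscr{X}_0^\dagger)\}$ has spectral radius at most one, and by the correspondence above this is exactly the statement that the eigentuples of $\mathscr{X}_1\star\mathscr{X}_0^\dagger$ in the Fourier domain have all entries of modulus at most one. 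Stringing these two equivalences together with Proposition~\ref{prop:stability} gives the claimed biconditional.

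One point that needs a little care: Proposition~\ref{prop:stability} is stated as an implication ("informative for stability \emph{if}"), whereas Corollary~\ref{coro:stability1} claims an "if and only if." I would note, as the text already does in the paragraph preceding the corollary and as follows from the LTI result of \cite{van2020data} cited inside the proof of Proposition~\ref{prop:stability}, that the matrix-level characterization is in fact an equivalence — the two conditions on $(\texttt{bcirc}(\mathscr{X}_0),\texttt{bcirc}(\mathscr{X}_1))$ are necessary and sufficient for informativity for stability of the associated LTI system — so the tensor-level Fourier-domain reformulation inherits the "if and only if." A second small subtlety is the quantifier "for any right T-inverse $\mathscr{X}_0^\dagger$": because $\texttt{bcirc}$ and $\star$ intertwine (i.e. $\texttt{bcirc}(\mathscr{X}_1\star\mathscr{X}_0^\dagger)=\texttt{bcirc}(\mathscr{X}_1)\texttt{bcirc}(\mathscr{X}_0)^\dagger$ for a corresponding right inverse of the block-circulant matrix, and every right inverse of $\texttt{bcirc}(\mathscr{X}_0)$ that is itself block-circulant arises this way), the quantifier over tensor right T-inverses matches the quantifier over the relevant matrix right inverses, so nothing is lost in the translation.

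I do not anticipate a genuine obstacle here; the content is entirely a dictionary translation and the "hard part," such as it is, is merely bookkeeping — making sure the quantifier over right T-inverses is handled consistently and that the upgrade from "if" to "if and only if" is justified by appeal to the underlying LTI equivalence rather than silently assumed.
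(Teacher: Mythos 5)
Your proposal is correct and follows essentially the same route as the paper: condition (i) is handled by citing Corollary \ref{coro:sisvd}, and condition (ii) by the fact that the eigenvalues of $\texttt{bcirc}(\mathscr{X}_1\star\mathscr{X}_0^\dagger)$ are the union of the entries of the eigentuples in the Fourier domain (which the paper attributes to \cite{lund2020tensor} and you derive directly from the unitary block-diagonalization). Your explicit attention to the ``if'' versus ``if and only if'' discrepancy and to the quantifier over right T-inverses is more careful than the paper's proof, which silently treats Proposition \ref{prop:stability} as an equivalence, but the substance is identical.
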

 
\begin{proof}
    The first condition follows Corollary \ref{coro:sisvd}. For the second condition, based on the finding in \cite{lund2020tensor}, the eigenvalues of $\texttt{bcirc}(\mathscr{X}_1\star \mathscr{X}_0^\dagger)$ are the union of elements from the eigentuples of $\mathscr{X}_1\star\mathscr{X}_0^\dagger$ in the Fourier domain. Therefore, the result follows from Proposition \ref{prop:stability}.
\end{proof}

\begin{corollary}\label{coro:stability2}
    The data $(\mathscr{X}_0,\mathscr{X}_1)$ are informative for stability if and only if the following conditions:
    \begin{enumerate}
    \item[(i)] the sum of the ranks of the block diagonal matrices of $\mathcal{F}
    \{\texttt{bcirc}(\mathscr{X}_0)\}$ is equal to $nr$;
        \item[(ii)] the block diagonal matrices of $\mathcal{F}\{\texttt{bcirc}(\mathscr{X}_1\star \mathscr{X}_0^\dagger)\}$ are stable for anyright T-inverse $\mathscr{X}_0^\dagger$.
    \end{enumerate}
    are satisfied.
\end{corollary}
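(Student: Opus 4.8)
The plan is to read Corollary \ref{coro:stability2} as a blockwise, Fourier-domain restatement of Corollary \ref{coro:stability1} (equivalently, of Proposition \ref{prop:stability}), and to obtain it by translating each of the two conditions through the block-diagonalization $\mathcal{F}\{\texttt{bcirc}(\cdot)\}$ introduced in Section \ref{sec:prelim}. The single fact I would use repeatedly is that $\textbf{F}_n\otimes\textbf{I}$ is invertible, so $\mathcal{F}\{\texttt{bcirc}(\cdot)\}$ is a similarity transformation; consequently it preserves both the rank and the multiset of eigenvalues of $\texttt{bcirc}$ of any third-order tensor, and it carries $\texttt{bcirc}$ of a $\star$-product (and of a right T-inverse) to the product (resp. a blockwise right inverse) of the corresponding block-diagonal matrices.

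For condition (i): this is essentially Corollary \ref{coro:sirk}. Since the invertible matrix $\textbf{F}_n\otimes\textbf{I}$ carries $\texttt{bcirc}(\mathscr{X}_0)$ to $\texttt{blkdiag}(\textbf{T}_1,\dots,\textbf{T}_r)$ with each $\textbf{T}_j\in\mathbb{C}^{n\times lh}$, its rank equals $\sum_{j=1}^{r}\operatorname{rank}(\textbf{T}_j)$, and each summand is at most $n$; hence the rank equals $nr$ if and only if every block has full row rank, i.e. the sum of the block ranks equals $nr$. By the result of \cite{lund2020tensor} already invoked in Corollary \ref{coro:sisvd}, this is exactly condition (i) of Corollary \ref{coro:stability1} restated, and also condition (i) of Proposition \ref{prop:stability}.

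For condition (ii): using that $\texttt{bcirc}$ is an algebra homomorphism compatible with the Fourier block-diagonalization, the $j$-th diagonal block of $\mathcal{F}\{\texttt{bcirc}(\mathscr{X}_1\star\mathscr{X}_0^\dagger)\}$ equals $\widehat{\textbf{X}}_{1,j}\,\widehat{\textbf{X}}_{0,j}^{\dagger}$, where $\widehat{\textbf{X}}_{k,j}$ is the $j$-th block of $\mathcal{F}\{\texttt{bcirc}(\mathscr{X}_k)\}$, and the family of such blocks obtained as $\mathscr{X}_0^\dagger$ ranges over all right T-inverses of $\mathscr{X}_0$ is exactly the family obtained by letting each $\widehat{\textbf{X}}_{0,j}^{\dagger}$ range over right inverses of $\widehat{\textbf{X}}_{0,j}$. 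Since $\texttt{bcirc}(\mathscr{X}_1\star\mathscr{X}_0^\dagger)$ is similar to the block-diagonal matrix assembled from these blocks, and a block-diagonal matrix is stable if and only if each diagonal block is stable, stability of $\texttt{bcirc}(\mathscr{X}_1\star\mathscr{X}_0^\dagger)$ is equivalent to stability of every such block. Feeding this equivalence into Proposition \ref{prop:stability} (which characterizes informativity for stability through $\texttt{bcirc}(\mathscr{X}_0)$ and $\texttt{bcirc}(\mathscr{X}_1\star\mathscr{X}_0^\dagger)$) yields that the two displayed conditions are jointly necessary and sufficient.

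The step needing the most care is not any estimate but the bookkeeping in condition (ii): one must verify that the Fourier-domain blocks of $\mathscr{X}_1\star\mathscr{X}_0^\dagger$ genuinely factor as a block of $\mathscr{X}_1$ times a right inverse of the corresponding block of $\mathscr{X}_0$, and — crucially — that the quantifier ``for any right T-inverse $\mathscr{X}_0^\dagger$'' decouples across the $r$ blocks, so that ``every block is stable for every blockwise right inverse'' is equivalent to ``$\texttt{bcirc}(\mathscr{X}_1\star\mathscr{X}_0^\dagger)$ is stable for every right T-inverse.'' Once that correspondence is established, the remainder is immediate from invertibility of $\textbf{F}_n\otimes\textbf{I}$ and the blockwise characterizations of rank and stability.
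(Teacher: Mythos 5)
Your proposal is correct and follows essentially the same route as the paper: condition (i) is reduced to Corollary \ref{coro:sirk}, and condition (ii) is obtained by observing that the Fourier block-diagonalization of $\texttt{bcirc}(\mathscr{X}_1\star\mathscr{X}_0^\dagger)$ is a similarity transformation whose spectrum is the union of the block spectra, after which everything is fed back into Proposition \ref{prop:stability}. Your extra bookkeeping on the factorization of the Fourier blocks and on the quantifier over right T-inverses only makes explicit what the paper leaves implicit (the paper phrases the block spectra via eigentuples, which is the same object), so there is no substantive difference in approach.
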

 
\begin{proof}
The first condition follows Corollary \ref{coro:sirk}. For the second condition, the eigentuples of $\mathscr{X}_1\star \mathscr{X}_0^\dagger$ in the Fourier domain can be computed from the eigenvalue decomposition of the block diagonal matrices of  $\mathcal{F}
    \{\texttt{bcirc}(\mathscr{X}_1\star \mathscr{X}_0^\dagger)\}$. Therefore, each block diagonal matrix is stable if and only if the corresponding eigentuple in the Fourier domain contains non-zero entries. The result follows from Proposition \ref{prop:stability}.
\end{proof}

\begin{remark}
    The time complexity of directly computing the eigenvalues of $\texttt{bcirc}(\mathscr{X}_1\star \mathscr{X}_0^\dagger)$ is estimated as $\mathcal{O}(n^3r^3)$. On the contrary, both Corollaries \ref{coro:stability1} and \ref{coro:stability2} only involve $\mathcal{O}(n^3r)$ operations for determining the data informativity for stability of TPDSs. Hence, T-product-based computations are advantageous compared to the unfolding-based approach.
\end{remark}

\subsection{Controllability \& Stabilizability}
The data informativity for controllability/stabilizability of TPDSs entails determining the conditions under which any system identified from the data is controllable/stabilizable. First, we introduce the model of TPDSs with control which is defined as
\begin{equation}\label{eq:tpdsc}
    \mathscr{X}(t+1) = \mathscr{A}\star\mathscr{X}(t)+\mathscr{B}
    \star\mathscr{U}(t),
\end{equation}
where $\mathscr{B}\in\mathbb{R}^{n\times m\times r}$ represents the control matrix, and $\mathscr{U}(t)\in\mathbb{R}^{m\times h
\times r}$ denotes the control input. The system \eqref{eq:tpdsc} is said to be controllable if for any initial state $\mathscr{X}(0)$ and target state $\mathscr{X}(T)$, there exists a sequence of inputs $\mathscr{U}(t)$ that drives the system from $\mathscr{X}(0)$ to $\mathscr{X}(T)$ \cite{chen2024t}. The system \eqref{eq:tpdsc} is considered stabilizable if there exists a sequence of inputs of the form 
$$
\mathscr{U}(t)=\mathscr{K}\star\mathscr{X}(t),
$$ 
for $\mathscr{K}\in\mathbb{R}^{m\times n\times r}$, such that the new system $\mathscr{A}+\mathscr{B}\star\mathscr{K}$ is stable. Finally, suppose that the input data is collected as
\begin{equation*}
    \mathscr{U}_0 = \begin{bmatrix}
        \mathscr{U}(0) & \mathscr{U}(1) & \cdots & \mathscr{U}(l-1)
    \end{bmatrix}\in\mathbb{R}^{m\times lh\times r}.
\end{equation*}

The data informativity of controllability and stabilizability for TPDSs can be defined as follows. 
 
\begin{definition}[\bf \textit{Controllability}]
    We say data $(\mathscr{U}_0,\mathscr{X}_0,\mathscr{X}_1)$ are informative for controllability if any pair $(\mathscr{A},\mathscr{B})$ identified by the data is controllable. 
\end{definition}
 
\begin{definition}[\bf \textit{Stabilizability}]
    We say data $(\mathscr{U}_0,\mathscr{X}_0,\mathscr{X}_1)$ are informative for stabilizability if any pair $(\mathscr{A},\mathscr{B})$ identified by the data is stabilizable. 
\end{definition}


\begin{proposition}\label{prop:info_control}
    The data $(\mathscr{U}_0,\mathscr{X}_0,\mathscr{X}_1)$ are informative for controllability if and only if the rank of $\texttt{bcirc}(\mathscr{X}_1-\lambda \mathscr{X}_0)$ is equal to $nr$ for any $\lambda\in\mathbb{C}$.
\end{proposition}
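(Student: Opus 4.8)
The plan is to reduce the controllability question for the TPDS \eqref{eq:tpdsc} to the already-known data-informativity condition for controllability of LTI systems, exactly as Proposition \ref{prop:stability} reduced stability to the LTI case. First I would substitute the collected data into \eqref{eq:tpdsc} to obtain $\mathscr{X}_1 = \mathscr{A}\star\mathscr{X}_0 + \mathscr{B}\star\mathscr{U}_0$, and then apply \texttt{bcirc} to both sides. Using the ring homomorphism property of \texttt{bcirc} (namely $\texttt{bcirc}(\mathscr{A}\star\mathscr{X}_0) = \texttt{bcirc}(\mathscr{A})\texttt{bcirc}(\mathscr{X}_0)$, and linearity), this yields $\texttt{bcirc}(\mathscr{X}_1) = \texttt{bcirc}(\mathscr{A})\texttt{bcirc}(\mathscr{X}_0) + \texttt{bcirc}(\mathscr{B})\texttt{bcirc}(\mathscr{U}_0)$. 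Thus the matrix triple $(\texttt{bcirc}(\mathscr{U}_0),\texttt{bcirc}(\mathscr{X}_0),\texttt{bcirc}(\mathscr{X}_1))$ is precisely a data set for the LTI system with matrices $(\texttt{bcirc}(\mathscr{A}),\texttt{bcirc}(\mathscr{B}))$ of dimension $nr$.

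Next I would invoke the LTI result of Van Waarde et al.\ \cite{van2020data}: the matrix data are informative for controllability if and only if $\mathrm{rank}\begin{bmatrix} \texttt{bcirc}(\mathscr{X}_1) - \lambda\,\texttt{bcirc}(\mathscr{X}_0) & \texttt{bcirc}(\mathscr{X}_0)\end{bmatrix}$ — or, in the form used there, $\mathrm{rank}\big(\texttt{bcirc}(\mathscr{X}_1) - \lambda\,\texttt{bcirc}(\mathscr{X}_0)\big) = nr$ for all $\lambda\in\mathbb{C}$ together with a rank condition covering the $\lambda$'s that are not eigenvalues; the clean statement is that informativity for controllability holds iff $\texttt{bcirc}(\mathscr{X}_1) - \lambda\,\texttt{bcirc}(\mathscr{X}_0)$ has full row rank $nr$ for every $\lambda\in\mathbb{C}$. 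Finally, using linearity of \texttt{bcirc} once more, $\texttt{bcirc}(\mathscr{X}_1) - \lambda\,\texttt{bcirc}(\mathscr{X}_0) = \texttt{bcirc}(\mathscr{X}_1 - \lambda\mathscr{X}_0)$, so the rank condition becomes exactly $\mathrm{rank}\big(\texttt{bcirc}(\mathscr{X}_1 - \lambda\mathscr{X}_0)\big) = nr$ for all $\lambda\in\mathbb{C}$, which is the claim.

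The one genuine subtlety — and the step I expect to need the most care — is verifying that ``informative for controllability of the tensor system'' and ``informative for controllability of the block-circulant LTI system'' are truly equivalent, rather than merely one implying the other. The forward direction needs the fact that \texttt{bcirc} is a bijection onto the algebra of block-circulant matrices, so that every LTI pair $(A,B)$ consistent with the matrix data arises as $(\texttt{bcirc}(\mathscr{A}),\texttt{bcirc}(\mathscr{B}))$ for some tensor pair consistent with the tensor data, and conversely; and one must check that controllability of $(\mathscr{A},\mathscr{B})$ in the sense of \cite{chen2024t} is by definition controllability of $(\texttt{bcirc}(\mathscr{A}),\texttt{bcirc}(\mathscr{B}))$. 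Both facts are standard in the T-product literature and were already used implicitly in the stability proposition, so I would cite them rather than reprove them. With that equivalence in hand, the proposition follows immediately from the LTI characterization and the linearity/homomorphism properties of \texttt{bcirc}.
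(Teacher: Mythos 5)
Your proposal is correct and follows essentially the same route as the paper's proof: reduce to the block-circulant LTI data via \texttt{bcirc}, invoke the controllability-informativity characterization of Van Waarde et al.\ \cite{van2020data}, and use the identity $\texttt{bcirc}(\mathscr{X}_1)-\lambda\,\texttt{bcirc}(\mathscr{X}_0)=\texttt{bcirc}(\mathscr{X}_1-\lambda\mathscr{X}_0)$. The only difference is that you explicitly flag (and would justify) the equivalence between tensor-level and block-circulant-level informativity via the bijectivity of \texttt{bcirc}, a point the paper's proof leaves implicit.
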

 
\begin{proof}
    Based on the finding of the data informativity for controllability of LTI systems \cite{van2020data},  the matrix data $(\texttt{bcirc}(\mathscr{X}_0), \texttt{bcirc}(\mathscr{X}_1))$ are informative for controllability if and only if the rank of $\texttt{bcirc}(\mathscr{X}_1)-\lambda \texttt{bcirc}(\mathscr{X}_0)$ is equal to $nr$ for any $\lambda\in\mathbb{C}$. Moreover, according to the properties of block circulant matrices, it follows that
    \begin{equation*}
       \texttt{bcirc}(\mathscr{X}_1)-\lambda \texttt{bcirc}(\mathscr{X}_0) = \texttt{bcirc}(\mathscr{X}_1-\lambda\mathscr{X}_0),
    \end{equation*}
and the result follows immediately.
\end{proof}

The following two corollaries can be proven similarly as Corollaries \ref{coro:sisvd} and \ref{coro:sirk} with T-SVD and block diagonal matrices in the Fourier domain.

\begin{corollary}\label{coro:con1}
    The data $(\mathscr{X}_0,\mathscr{X}_1)$ are informative for controllability if and only if the singular tuples of $\mathscr{X}_1-\lambda \mathscr{X}_0$ in the Fourier domain
     contain non-zero entries for any $\lambda\in\mathbb{C}$. 
\end{corollary}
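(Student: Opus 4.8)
The plan is to mirror the proofs of Corollaries~\ref{coro:sisvd} and~\ref{coro:sirk}, now applied to the affine tensor pencil $\mathscr{X}_1-\lambda\mathscr{X}_0$. By Proposition~\ref{prop:info_control}, the data are informative for controllability if and only if the rank of $\texttt{bcirc}(\mathscr{X}_1-\lambda\mathscr{X}_0)$ equals $nr$ for every $\lambda\in\mathbb{C}$; hence it suffices to re-express this rank condition, separately for each fixed $\lambda$, through the T-SVD of $\mathscr{X}_1-\lambda\mathscr{X}_0$ and its block-diagonal form in the Fourier domain.

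First I would fix $\lambda\in\mathbb{C}$ and write $\mathscr{Y}:=\mathscr{X}_1-\lambda\mathscr{X}_0\in\mathbb{C}^{n\times lh\times r}$, so that $\mathcal{F}\{\texttt{bcirc}(\mathscr{Y})\}=\texttt{blkdiag}(\mathbf{Y}_1,\dots,\mathbf{Y}_r)$ with each $\mathbf{Y}_j\in\mathbb{C}^{n\times lh}$. Because the Fourier conjugation is unitary, the multiset of singular values of $\texttt{bcirc}(\mathscr{Y})$ equals the union of the singular values of $\mathbf{Y}_1,\dots,\mathbf{Y}_r$, which by the correspondence used in \cite{lund2020tensor} is precisely the union of the entries of the singular tuples of $\mathscr{Y}$ in the Fourier domain. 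Since $\texttt{bcirc}(\mathscr{Y})$ has $nr$ rows, its rank equals the number of its non-zero singular values and is at most $nr$; therefore it equals $nr$ if and only if all of these $nr$ singular values are non-zero, i.e., if and only if every entry of every singular tuple of $\mathscr{Y}$ in the Fourier domain is non-zero (equivalently, each block $\mathbf{Y}_j$ has full row rank $n$). The standing assumption $n\le lh$ is what guarantees that the singular tuples carry exactly $nr$ entries in total, so that ``no zero entry'' and ``rank $nr$'' coincide.

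Finally I would reinstate the quantifier over $\lambda$: applying the previous equivalence pointwise, the rank condition of Proposition~\ref{prop:info_control} holds for all $\lambda\in\mathbb{C}$ if and only if, for all $\lambda\in\mathbb{C}$, the singular tuples of $\mathscr{X}_1-\lambda\mathscr{X}_0$ in the Fourier domain contain no zero entry, which is the assertion of the corollary. The only delicate points are bookkeeping: the pencil $\mathscr{X}_1-\lambda\mathscr{X}_0$ and hence $\texttt{bcirc}(\mathscr{X}_1-\lambda\mathscr{X}_0)$ are complex-valued, so the T-SVD and the spectral correspondence of \cite{lund2020tensor} must be invoked in their complex form (singular values remain preserved under the unitary Fourier conjugation); the dimension hypothesis $n\le lh$ is needed for rank $nr$ to be attainable at all; and the substantive requirement is the uniformity over $\lambda\in\mathbb{C}$, which, however, passes through the per-$\lambda$ equivalence without extra work. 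Accordingly I do not anticipate a genuine obstacle here — the entire effort lies in threading the block-diagonalization through the PBH-type rank condition of Proposition~\ref{prop:info_control}.
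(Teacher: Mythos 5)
Your proposal is correct and follows exactly the route the paper intends: it transfers the rank condition of Proposition~\ref{prop:info_control} to the singular tuples of $\mathscr{X}_1-\lambda\mathscr{X}_0$ in the Fourier domain via the block-diagonalization, just as in the proofs of Corollaries~\ref{coro:sisvd} and~\ref{coro:sirk}, which is precisely what the paper states when it says these corollaries ``can be proven similarly.'' Your added remarks on the complex-valuedness of the pencil and the implicit hypothesis $n\le lh$ are careful bookkeeping that the paper leaves unstated.
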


\begin{corollary}\label{coro:con2}
    The data $(\mathscr{X}_0,\mathscr{X}_1)$ are informative for stabilizability if and only if the sum of the ranks of the block diagonal matrices of $\mathcal{F}
    \{\texttt{bcirc}(\mathscr{X}_1-\lambda \mathscr{X}_0)\}$ is equal to $nr$ for any $\lambda\in\mathbb{C}$.
\end{corollary}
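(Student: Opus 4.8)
The plan is to mirror the argument for Corollary~\ref{coro:sirk}, with Proposition~\ref{prop:info_control} playing the role that Proposition~\ref{prop:si} plays there, and then to push the single block-circulant rank condition through the Fourier block-diagonalization. First I would invoke Proposition~\ref{prop:info_control}: the data are informative if and only if $\texttt{rank}\big(\texttt{bcirc}(\mathscr{X}_1-\lambda\mathscr{X}_0)\big)=nr$ for every $\lambda\in\mathbb{C}$. The whole task is then to rewrite this one matrix rank, uniformly in $\lambda$, in terms of the $r$ Fourier blocks.

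For the Fourier reduction, note that $\texttt{bcirc}(\cdot)$ is linear, so $\texttt{bcirc}(\mathscr{X}_1-\lambda\mathscr{X}_0)=\texttt{bcirc}(\mathscr{X}_1)-\lambda\,\texttt{bcirc}(\mathscr{X}_0)$, and the Fourier transform $\mathcal{F}\{\cdot\}$ (left and right multiplication by invertible Fourier factors $\mathbf{F}_r\otimes\mathbf{I}$) is linear as well, so $\mathcal{F}\{\texttt{bcirc}(\mathscr{X}_1-\lambda\mathscr{X}_0)\}$ is block diagonal with $j$-th block $\mathbf{X}_{1,j}-\lambda\mathbf{X}_{0,j}$, where $\mathbf{X}_{1,j}$ and $\mathbf{X}_{0,j}$ are the corresponding blocks of $\mathcal{F}\{\texttt{bcirc}(\mathscr{X}_1)\}$ and $\mathcal{F}\{\texttt{bcirc}(\mathscr{X}_0)\}$. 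Since conjugation by an invertible matrix preserves rank and the rank of a block-diagonal matrix is the sum of the ranks of its blocks, for each fixed $\lambda$
\begin{equation*}
\texttt{rank}\big(\texttt{bcirc}(\mathscr{X}_1-\lambda\mathscr{X}_0)\big)=\sum_{j=1}^{r}\texttt{rank}\big(\mathbf{X}_{1,j}-\lambda\mathbf{X}_{0,j}\big).
\end{equation*}
Equivalently, one can route this through Corollary~\ref{coro:con1}, just as Corollary~\ref{coro:sirk} is routed through Corollary~\ref{coro:sisvd}: the singular tuples of $\mathscr{X}_1-\lambda\mathscr{X}_0$ in the Fourier domain are obtained from the SVDs of the blocks $\mathbf{X}_{1,j}-\lambda\mathbf{X}_{0,j}$, so all of them having a nonzero entry is the same as each block having full row rank $n$, that is, the block ranks summing to $nr$.

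It remains to quantify over $\lambda$: because the block-diagonalizing factor $\mathbf{F}_r\otimes\mathbf{I}$ does not depend on $\lambda$, the displayed rank identity holds uniformly, so ``$\texttt{rank}\big(\texttt{bcirc}(\mathscr{X}_1-\lambda\mathscr{X}_0)\big)=nr$ for all $\lambda$'' is equivalent to ``$\sum_{j}\texttt{rank}\big(\mathbf{X}_{1,j}-\lambda\mathbf{X}_{0,j}\big)=nr$ for all $\lambda$'', which is exactly the stated criterion; combined with Proposition~\ref{prop:info_control}, this yields the corollary. The ingredients here --- linearity of $\texttt{bcirc}$ and of $\mathcal{F}\{\cdot\}$, rank additivity across block-diagonal blocks, and rank invariance under invertible conjugation --- are the same standard facts used for Corollaries~\ref{coro:sisvd}--\ref{coro:sirk}, so I expect the only real subtlety to be bookkeeping: keeping the universal quantifier over $\lambda\in\mathbb{C}$ attached to the fixed-$\lambda$ rank identity rather than moving it inside the ranks, which is harmless since the transform is $\lambda$-independent, and noting that the criterion is effectively finite because each block pencil $\mathbf{X}_{1,j}-\lambda\mathbf{X}_{0,j}$ loses rank only at finitely many $\lambda$.
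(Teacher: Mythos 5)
Your Fourier block-diagonalization argument is exactly the route the paper intends: the paper gives no explicit proof, saying only that Corollaries \ref{coro:con1} and \ref{coro:con2} ``can be proven similarly'' to Corollaries \ref{coro:sisvd} and \ref{coro:sirk}, and every step of that part of your plan---linearity of \texttt{bcirc} and of the Fourier transform, rank additivity over diagonal blocks, rank invariance under the invertible, $\lambda$-independent conjugation---is sound and correctly handles the quantifier over $\lambda$.

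There is, however, a mismatch you never address: the corollary as stated concerns \emph{stabilizability}, while Proposition \ref{prop:info_control}, which anchors your entire chain of equivalences, characterizes informativity for \emph{controllability}. What you actually prove is ``informative for controllability if and only if the block ranks sum to $nr$ for every $\lambda\in\mathbb{C}$.'' That gives the ``if'' direction of the stated corollary (informativity for controllability implies informativity for stabilizability), but not the ``only if'' direction: data can be informative for stabilizability while the rank condition fails at some $\lambda$ with $|\lambda|<1$. The paper itself points this out immediately after the corollary, noting that for stabilizability the criterion of Van Waarde et al.\ restricts the quantifier to $|\lambda|\ge 1$. So either ``stabilizability'' in the statement is a typo for ``controllability''---in which case your proof is complete and coincides with the intended one---or you must replace Proposition \ref{prop:info_control} by its stabilizability counterpart and restrict to $|\lambda|\ge 1$, in which case the statement's ``for any $\lambda\in\mathbb{C}$'' is too strong. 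Either way, you should have flagged and resolved this discrepancy rather than silently substituting one property for the other.
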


For the data informativity regarding the stabilizability of TPDSs, an additional condition of $|\lambda| \geq 1$ is required, as established by \cite{van2020data}. Additionally, the computational complexity analysis follows similar principles as those presented in Remark \ref{rk:1}.

\section{Numerical Examples}\label{sec:num}
We  proceed to illustrate our framework with the following numerical experiments. All experiments in this section were conducted on a platform equipped with an M1 Pro CPU and 16GB of memory. The code used for these experiments is available at {\blue \url{https://github.com/dytroshut/TPDSs}}.

\subsection{System Identification}
Identifying the underlying systems from imaging data is critical to extract valuable information, enhance predictive capabilities, and improve overall system performance. In this example, we evaluated our approach for determining the data informativity for system identification of TPDSs by applying Corollary \ref{coro:sirk}. We first randomly generated imaging data  $\mathscr{X}(0)$, $\mathscr{X}(1)$, $\dots$, $\mathscr{X}(l)\in\mathbb{R}^{2\times 2\times r}$ and constructed the state data tensor $\mathscr{X}_0\in \mathbb{R}^{2\times 2l\times r}$ with $l=10$ and $r= 2^{p}$ for $p=2,3,\dots,10$. We then computed the ranks of all block matrices in $\mathcal{F}\{\texttt{bcirc}(\mathscr{X}_0)\}$ and compared the efficiency of our approach with Proposition \ref{prop:si} that directly computes the rank of $\texttt{bcirc}(\mathscr{X}_0)$. The computation time for each rank relative to the corresponding dimension $r$ is shown in Fig.~\ref{fig:rank}a, demonstrating that our approach significantly outperforms the direct rank computation. Fig.\ref{fig:rank}b also shows that our results are consistent with the complexity analysis.


\begin{figure*}[t]
    \centering
    \includegraphics[width=\linewidth]{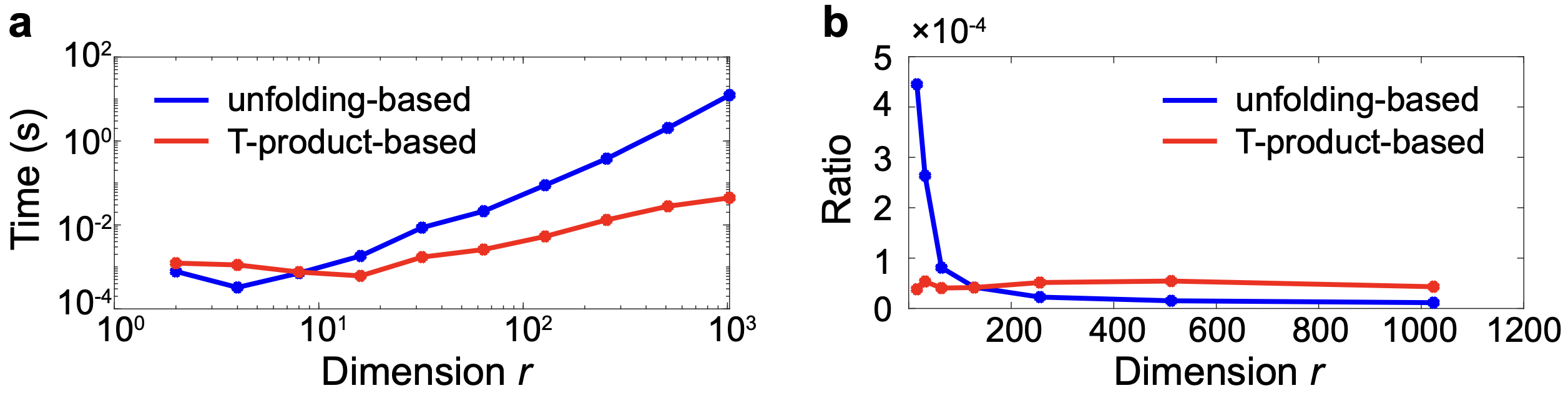}
    \caption{Computational time comparison in determining the data informativity for system identification between the unfolding-based and T-product-based approaches. \textbf{a.} Log-log plot of computational time with respect to the dimension of the third mode $r$. \textbf{b.} Ratio of time to the dimension of the third mode $r$ (i.e., time/$r^3$ for the unfolding-based approach and time/$r$ for the T-product-based approach). }
    \label{fig:rank}
\end{figure*}

\begin{figure*}[t]
    \centering
    \includegraphics[width=\linewidth]{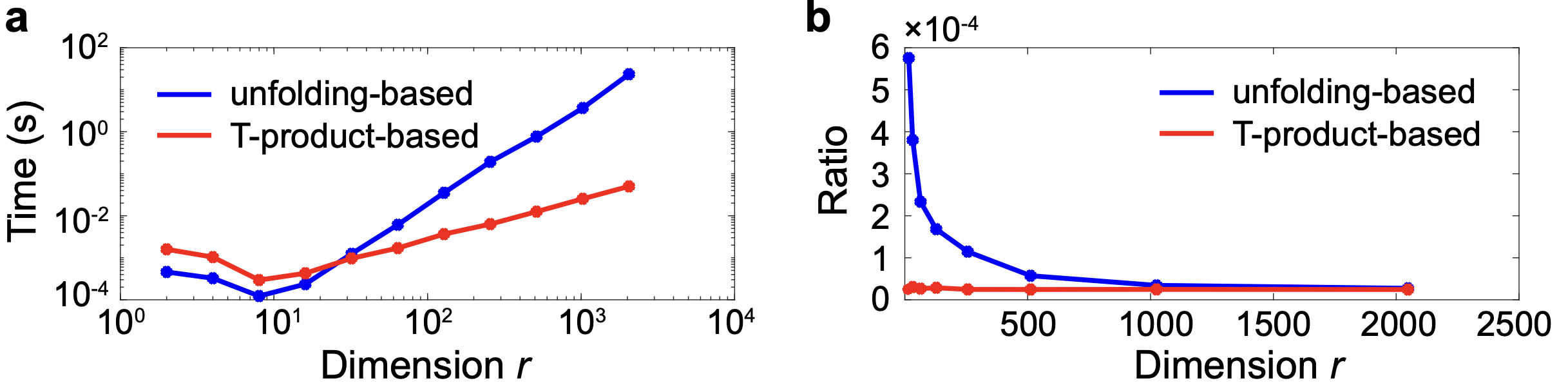}
    \caption{Computational time comparison in determining the data informativity for stability between the unfolding-based and T-product-based approaches. \textbf{a.} Log-log plot of computational time with respect to the dimension of the third mode $r$. \textbf{b.} Ratio of time to the dimension of the third mode $r$ (i.e., time/$r^3$ for the unfolding-based approach and time/$r$ for the T-product-based approach). }
    \label{fig:stable}
\end{figure*}

\subsection{Stability}
Assessing the stability of imaging data is crucial for ensuring that the interpretations and analyses derived from this data are reliable and accurate. In this example, we evaluated our approach for determining the data informativity for stability of TPDSs by applying Corollary \ref{coro:stability2}.  We first randomly generated  imaging data  $\mathscr{X}(0)$, $\mathscr{X}(1)$, $\dots$, $\mathscr{X}(l)\in\mathbb{R}^{2\times 2\times r}$ and constructed the state data tensors $\mathscr{X}_0$, $\mathscr{X}_1\in \mathbb{R}^{2\times 2l\times r}$ with $l=10$ and $r= 2^{p}$ for $p=2,3,\dots,11$. After ensuring the first condition, we computed the eigenvalue decompositions of all block matrices in $\mathcal{F}\{\texttt{bcirc}(\mathscr{X}_1 \star \mathscr{X}_0^\dagger)\}$ and comapred the efficiency of our approach with Proposition \ref{prop:stability} which directly computes the eigenvalue decomposition of $\texttt{bcirc}(\mathscr{X}_1 \star \mathscr{X}_0^\dagger)$. As with the first example, the computational time for using Corollary \ref{coro:stability2} is significantly less than that of the unfolding-based approach as $r$ increases, see Fig. \ref{fig:stable}a. Moreover, this result is consistent with our complexity analysis, see Fig. \ref{fig:stable}b.

\subsection{Controllability}

Determining controllability from imaging data is essential for optimal control design in disease treatment within the medical field. In this example, we evaluated our approach for determining the data informativity for controllability of TPDSs by applying Corollary \ref{coro:con2}. We first randomly generated imaging data  $\mathscr{X}(0)$, $\mathscr{X}(1)$, $\dots$, $\mathscr{X}(l)\in\mathbb{R}^{2\times 2\times r}$ and constructed the state data tensors $\mathscr{X}_0$, $\mathscr{X}_1\in \mathbb{R}^{2\times 2l\times r}$ with $l=10$ and $r= 2^{p}$ for $p=2,3,\dots,9$.  We then computed the ranks of all block matrices in $\mathcal{F}\{\texttt{bcirc}(\mathscr{X}_1-\lambda \mathscr{X}_0)\}$ and compared the efficiency of our approach with Proposition \ref{prop:info_control} that directly computes the rank of $\texttt{bcirc}(\mathscr{X}_1-\lambda \mathscr{X}_0)$. Here, we used the MATLAB symbolic computation to compute the ranks (i.e., symbolic ranks).  As with the first two examples, our approach is significantly faster in determining the data informativity for controllability when $r$ is large compared to the unfolding-based approach, see Fig. \ref{fig:info}a. Again, our results  align with the complexity analysis, see Fig. \ref{fig:info}b.



\begin{figure*}[htb!]
    \centering
    \includegraphics[width=\linewidth]{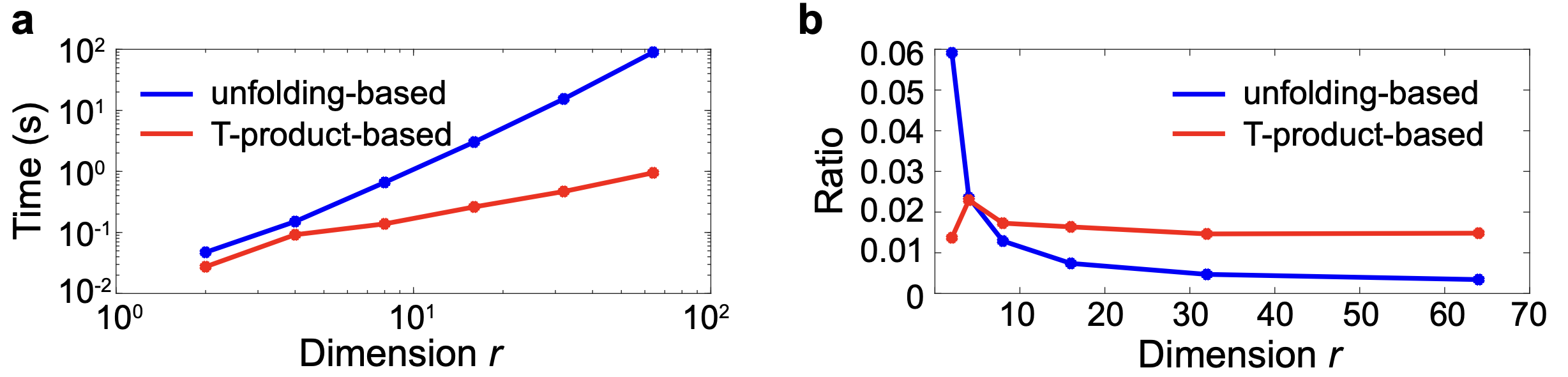}
    \caption{Computational time comparison in determining the data informativity for controllability between the unfolding-based and T-product-based approaches. \textbf{a.} Log-log plot of computational time with respect to the dimension of the third mode $r$. \textbf{b.} Ratio of time to the dimension of the third mode $r$ (i.e., time/$r^3$ for the unfolding-based approach and time/$r$ for the T-product-based approach).}
    \label{fig:info}
\end{figure*}




\section{Conclusion}\label{sec:conclusion}
In this article, we introduced a data-driven analysis framework for TPDSs, where the system evolution is governed by the T-product. We established effective and efficient criteria in determining the data informativity for system identification, stability, controllability, and stabilizability of TPDSs by leveraging the unique properties of the T-product. We further offered detailed complexity analyses for the proposed criteria and verified them with numerical examples.  In the future, it will be valuable to explore the data-driven control of TPDSs, e.g., data informativity for state feedback and quadratic regulators. Additionally, applying the data-driven framework of TPDSs to real-world tasks, including image compression and video denoising, is essential.


\bibliographystyle{plain}
\bibliography{references}

\begin{thebibliography}{10}

\bibitem{ahmadi2023fast}
Salman Ahmadi-Asl, Maame~Gyamfua Asante-Mensah, Andrzej Cichocki, Anh~Huy Phan, Ivan Oseledets, and Jun Wang.
\newblock Fast cross tensor approximation for image and video completion.
\newblock {\em Signal Processing}, 213:109121, 2023.

\bibitem{berberich2020trajectory}
Julian Berberich and Frank Allg{\"o}wer.
\newblock A trajectory-based framework for data-driven system analysis and control.
\newblock {\em 2020 European Control Conference (ECC)}, pages 1365--1370, 2020.

\bibitem{braman2010third}
Karen Braman.
\newblock Third-order tensors as linear operators on a space of matrices.
\newblock {\em Linear Algebra and its Applications}, 433(7):1241--1253, 2010.

\bibitem{chang2022t}
Shih~Yu Chang and Yimin Wei.
\newblock T-product tensors—part ii: tail bounds for sums of random t-product tensors.
\newblock {\em Computational and Applied Mathematics}, 41(3):99, 2022.

\bibitem{chen2024t}
Can Chen.
\newblock t-product-based dynamical systems.
\newblock In {\em Tensor-Based Dynamical Systems: Theory and Applications}, pages 97--106. Springer, 2024.

\bibitem{chen2024tensor}
Can Chen.
\newblock Tensor preliminaries.
\newblock In {\em Tensor-Based Dynamical Systems: Theory and Applications}, pages 1--24. Springer, 2024.

\bibitem{chen2021multilinear}
Can Chen, Amit Surana, Anthony~M Bloch, and Indika Rajapakse.
\newblock Multilinear control systems theory.
\newblock {\em SIAM Journal on Control and Optimization}, 59(1):749--776, 2021.

\bibitem{chen2015functional}
Haiming Chen, Jie Chen, Lindsey~A Muir, Scott Ronquist, Walter Meixner, Mats Ljungman, Thomas Ried, Stephen Smale, and Indika Rajapakse.
\newblock Functional organization of the human 4d nucleome.
\newblock {\em Proceedings of the National Academy of Sciences}, 112(26):8002--8007, 2015.

\bibitem{ding2018tensor}
Wenwen Ding, Kai Liu, Evgeny Belyaev, and Fei Cheng.
\newblock Tensor-based linear dynamical systems for action recognition from 3d skeletons.
\newblock {\em Pattern Recognition}, 77:75--86, 2018.

\bibitem{gilman2022grassmannian}
Kyle Gilman, Davoud~Ataee Tarzanagh, and Laura Balzano.
\newblock Grassmannian optimization for online tensor completion and tracking with the t-svd.
\newblock {\em IEEE Transactions on Signal Processing}, 70:2152--2167, 2022.

\bibitem{hoover2021new}
Randy~C Hoover, Kyle Caudle, and Karen Braman.
\newblock A new approach to multilinear dynamical systems and control.
\newblock {\em arXiv preprint arXiv:2108.13583}, 2021.

\bibitem{khoromskij2018tensor}
Boris~N Khoromskij.
\newblock {\em Tensor numerical methods in scientific computing}, volume~19.
\newblock Walter de Gruyter GmbH \& Co KG, 2018.

\bibitem{kilmer2013third}
Misha~E Kilmer, Karen Braman, Ning Hao, and Randy~C Hoover.
\newblock Third-order tensors as operators on matrices: A theoretical and computational framework with applications in imaging.
\newblock {\em SIAM Journal on Matrix Analysis and Applications}, 34(1):148--172, 2013.

\bibitem{kilmer2011factorization}
Misha~E Kilmer and Carla~D Martin.
\newblock Factorization strategies for third-order tensors.
\newblock {\em Linear Algebra and its Applications}, 435(3):641--658, 2011.

\bibitem{kolda2009tensor}
Tamara~G Kolda and Brett~W Bader.
\newblock Tensor decompositions and applications.
\newblock {\em SIAM review}, 51(3):455--500, 2009.

\bibitem{lund2020tensor}
Kathryn Lund.
\newblock The tensor t-function: A definition for functions of third-order tensors.
\newblock {\em Numerical Linear Algebra with Applications}, 27(3):e2288, 2020.

\bibitem{markovsky2021behavioral}
Ivan Markovsky and Florian D{\"o}rfler.
\newblock Behavioral systems theory in data-driven analysis, signal processing, and control.
\newblock {\em Annual Reviews in Control}, 52:42--64, 2021.

\bibitem{ragnarsson2012block}
Stefan Ragnarsson and Charles~F Van~Loan.
\newblock Block tensor unfoldings.
\newblock {\em SIAM Journal on Matrix Analysis and Applications}, 33(1):149--169, 2012.

\bibitem{rogers2013multilinear}
Mark Rogers, Lei Li, and Stuart~J Russell.
\newblock Multilinear dynamical systems for tensor time series.
\newblock {\em Advances in Neural Information Processing Systems}, 26, 2013.

\bibitem{tarzanagh2018fast}
Davoud~Ataee Tarzanagh and George Michailidis.
\newblock Fast randomized algorithms for t-product based tensor operations and decompositions with applications to imaging data.
\newblock {\em SIAM Journal on Imaging Sciences}, 11(4):2629--2664, 2018.

\bibitem{van2020data}
Henk~J Van~Waarde, Jaap Eising, Harry~L Trentelman, and M~Kanat Camlibel.
\newblock Data informativity: A new perspective on data-driven analysis and control.
\newblock {\em IEEE Transactions on Automatic Control}, 65(11):4753--4768, 2020.

\bibitem{willems2005note}
Jan~C Willems, Paolo Rapisarda, Ivan Markovsky, and Bart~LM De~Moor.
\newblock A note on persistency of excitation.
\newblock {\em Systems \& Control Letters}, 54(4):325--329, 2005.

\bibitem{williams2018unsupervised}
Alex~H Williams, Tony~Hyun Kim, Forea Wang, Saurabh Vyas, Stephen~I Ryu, Krishna~V Shenoy, Mark Schnitzer, Tamara~G Kolda, and Surya Ganguli.
\newblock Unsupervised discovery of demixed, low-dimensional neural dynamics across multiple timescales through tensor component analysis.
\newblock {\em Neuron}, 98(6):1099--1115, 2018.

\bibitem{yang2020sparse}
Hang-Jin Yang, Yu-Ying Zhao, Jin-Xing Liu, Yu-Xia Lei, Jun-Liang Shang, and Xiang-Zhen Kong.
\newblock Sparse regularization tensor robust pca based on t-product and its application in cancer genomic data.
\newblock {\em 2020 IEEE International Conference on Bioinformatics and Biomedicine (BIBM)}, pages 2131--2138, 2020.

\bibitem{zhang2016exact}
Zemin Zhang and Shuchin Aeron.
\newblock Exact tensor completion using t-svd.
\newblock {\em IEEE Transactions on Signal Processing}, 65(6):1511--1526, 2016.

\end{thebibliography}

\end{document}